\documentclass[doublecolumn]{IEEEtran}
\usepackage{calc}
\usepackage[T1]{fontenc}

\usepackage[overload]{empheq}
\usepackage{cleveref}
\usepackage{multirow}
\usepackage{cite}
\usepackage{graphicx,subfigure}
\usepackage{psfrag}
\usepackage{amsmath,amssymb}
\usepackage{color}
\usepackage{tikz}
\usepackage{cleveref}
\usepackage{empheq}
\usepackage[numbers]{natbib}

\interdisplaylinepenalty=2500

\newenvironment{proof}{\begin{IEEEproof}}{\end{IEEEproof}}






\DeclareMathOperator*{\defeq}{\triangleq}


\newtheorem{proposition}{Proposition}

\newtheorem{lemma}{Lemma}










\newcommand{\bit}{\begin{itemize}}
\newcommand{\eit}{\end{itemize}}

\newcommand{\bc}{\begin{center}}
\newcommand{\ec}{\end{center}}

\newcommand{\ba}{\begin{array}}
\newcommand{\ea}{\end{array}}

\newcommand{\beq}{\begin{equation}}
\newcommand{\eeq}{\end{equation}}

\newcommand{\beqn}{\begin{equation*}}
\newcommand{\eeqn}{\end{equation*}}

\newcommand{\bean}{\begin{eqnarray*}}
\newcommand{\eean}{\end{eqnarray*}}
\newcommand{\bea}{\begin{eqnarray}}
\newcommand{\eea}{\end{eqnarray}}

\def\C{\mathbb{C}}
\def\E{\mathbb{E}}



\def\dv{\boldsymbol{d}}

\def\hv{\boldsymbol{h}}

\def\qv{\boldsymbol{q}}

\def\uv{\boldsymbol{u}}
\def\vv{\boldsymbol{v}}

\def\xv{\boldsymbol{x}}




\newtheorem{remark}{Remark}



\pagestyle{empty}
\makeatletter
\def\blfootnote{\gdef\@thefnmark{}\@footnotetext}
\makeatother

\begin{document}
\sloppy

\title{Cloud-Edge Non-Orthogonal Transmission for Fog Networks with Delayed CSI at the Cloud}
\author{ Jingjing Zhang and Osvaldo Simeone  
\vspace{-1.1em}
	}  
\maketitle
\thispagestyle{empty}

\begin{abstract}
In a Fog Radio Access Network (F-RAN), the cloud processor (CP) collects channel state information (CSI)\blfootnote{The authors are with the Department of Informatics at King's College London, UK (emails: jingjing.1.zhang@kcl.ac.uk, osvaldo.simeone@kcl.ac.uk). The authors have received funding from the European Research Council (ERC) under the European Union's Horizon 2020 Research and Innovation Programme (Grant Agreement No. 725731).} from the edge nodes (ENs) over fronthaul links. As a result, the CSI at the cloud is generally affected by an error due to outdating. In this work, the problem of content delivery based on fronthaul transmission and edge caching is studied from an information-theoretic perspective in the high signal-to-noise ratio (SNR) regime. For the set-up under study, under the assumption of perfect CSI, prior work has shown the (approximate or exact) optimality of a scheme in which the ENs transmit information received from the cloud and cached contents over orthogonal resources. In this work, it is demonstrated that a non-orthogonal transmission scheme is able to substantially improve the latency performance in the presence of imperfect CSI at the cloud.
\end{abstract}


%
%

\begin{IEEEkeywords}
F-RAN, caching, imperfect CSI.
\end{IEEEkeywords}

\section{Introduction}

Consider the scenario shown in Fig.~\ref{fig:model}, in which a wireless cellular system delivers contents to a number of users by means of a centralized CP with full access to the library content, fronthaul links, and ENs endowed with caching capabilities. This set-up, referred to as a F-RAN, is motivated by current trends in the evolution of wireless cellular systems.

The F-RAN model has been recently studied from an information-theoretic perspective \cite{ZS:18}, with special cases excluding CP and fronthaul links investigated in \cite{HND:16,NMA:17}. These works consider the high-SNR regime in order to focus on the impact of interference. Reference \cite{ZS:18} proves that, in the presence of full CSI at both CP and ENs, it is approximately optimal to have the ENs transmit information received from the cloud and cached contents over orthogonal resources using time division multiplexing. 

In practice, as seen in Fig.~\ref{fig:model}, the cloud processor collects CSI from the ENs over fronthaul links. The ENs, instead, can directly receive CSI from the users via feedback under a Frequency Division Multiplexing (FDD) operation. As a result, the CSI at the CP is generally affected by an additional error due to outdating associated to fronthaul transmission latency.

In \cite{CJHR:16}, a novel approach is proposed that is shown to improve the high-SNR performance of multi-antenna systems in the presence of imperfect CSI. The scheme is based on rate splitting and superposition coding. Accordingly, a message of interest for multiple receivers is transmitted on the same radio resources as private messages intended for individual users in a non-orthogonal fashion with a proper power allocation.

In this work, it is demonstrated that for an F-RAN with imperfect CSI at the CP, a non-orthogonal transmission scheme is able to substantially improve the latency performance in the presence of imperfect CSI at the cloud in the high-SNR regime. Unlike \cite{CJHR:16}, the signal being multiplexed non-orthogonally are the information received from the cloud on the fronthaul links and locally encoded functions of the cached contents. A similar superposition scheme, known as hybrid fronthauling, was first proposed in \cite{PSS:16}. There, assuming full CSI, it was shown via numerical results that the scheme offers performance advantages in the finite-SNR regime, particularly for lower cache capacities.

\textbf{Notation:}
For any integer $K$, we define the set $[K]\defeq\{1,2,\cdots,K\}$. We also define the notation $\{f_n\}_{n=1}^{N}\defeq\{f_1,\cdots,f_N\}$. For a set $\mathcal{A}$, $|\mathcal{A}|$ represents the cardinality. We use the symbol $\doteq$ to denote an exponential equality in the sense that we write $f(P)\doteq P^\alpha$ if the limit $\lim_{P\rightarrow\infty} \log f(P)/\log P=\alpha$ holds. 

\begin{figure}[t!] 
  \centering
\includegraphics[width=0.7\columnwidth,height=4.5cm]{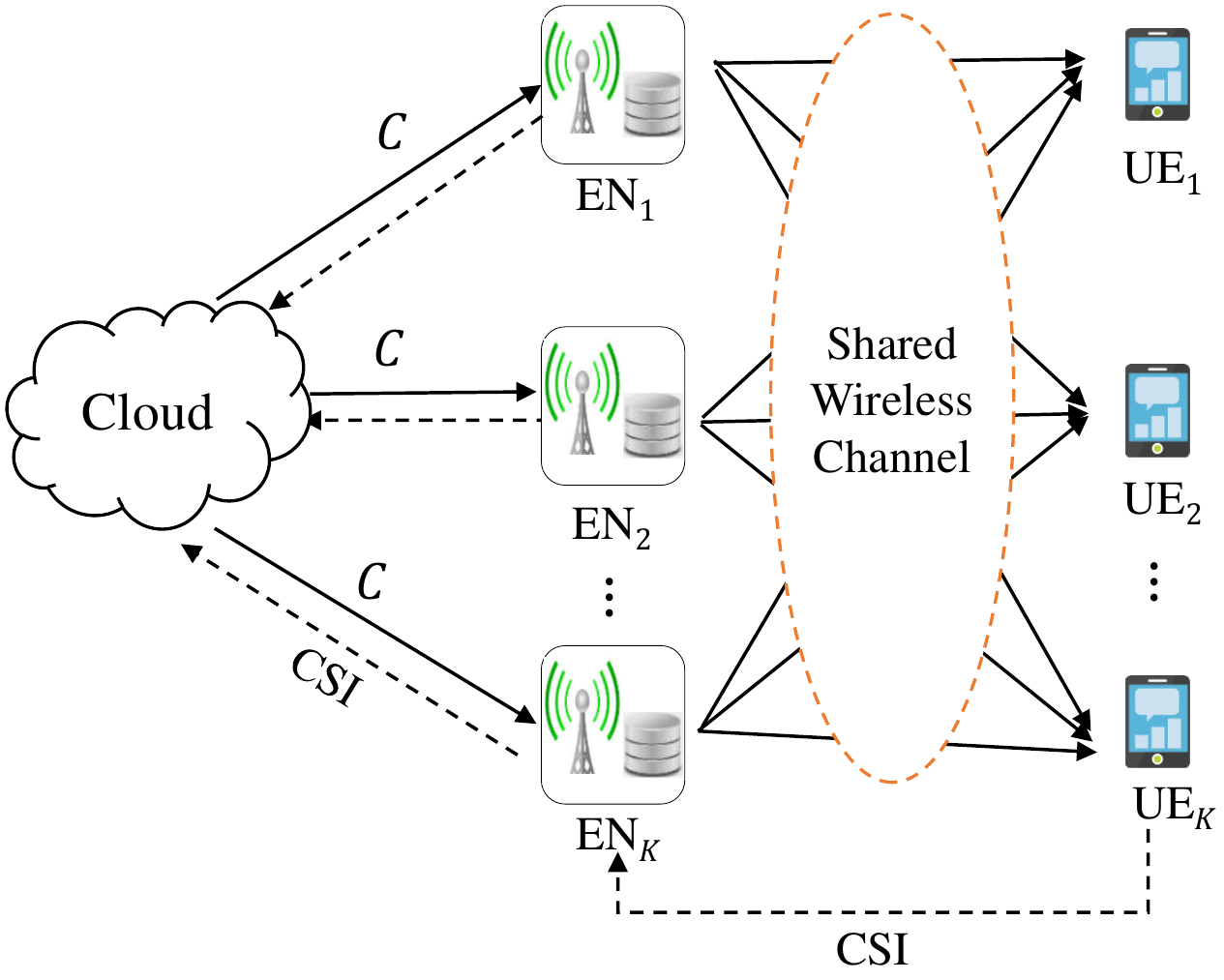}
\vspace{-0.8em}
\caption{Cloud and cache-based F-RAN system with solid lines representing downlink communication for content delivery and dashed lines indicating uplink CSI feedback.}
\label{fig:model}
\vspace{-1.5em}
\end{figure}

\section{System Model}

\subsection{F-RAN Model}

As shown in Fig.~\ref{fig:model}, we consider a F-RAN model in which $K$ ENs serve $K$ users through a shared wireless channel, with all nodes having a single antenna. We study downlink content delivery from a library of $N$ files $\{W_n\}_{n=1}^{N}$, each with size $L$ bits. Each EN is equipped with a cache, which can store $\mu NL$ bits from the library during the offline caching phase, with $\mu \in [0,1]$ being the fractional cache capacity. A CP connects to each EN via a dedicated fronthaul link with capacity $C$ bits per symbol, and it has access to the whole library. A symbol refers to a channel use of the wireless channel. 

In each dedicated time-slot, the signal received at each symbol for each user $k$ is given as
\begin{align} \label{transmit}
y_k=\hv^T_k\xv+n_k= \sum_{i=1}^{K} h_{k}^{i} x_i + n_k, 
\end{align}
where we have defined the vectors $\hv_{k}=[h_{k}^{1},\cdots,h_{k}^{K}]^T$ and $\xv=[x_1,\cdots,x_K]^T$ with $x_i$ being the transmitted signal of EN $i$; $h_{k}^{i}$ is the complex channel coefficient between EN $i$ and user $k$; and $n_k$ is a zero-mean complex Gaussian noise with normalized unit power. Channels are independent, drawn from a continuous distribution, and constant within each time-slot. We impose the power constraint $\E[|x_i|^2] \leq P$ for each EN $i$.

As illustrated in Fig.~\ref{fig:model}, we assume an FDD operation in which the users perform channel estimation based on downlink training, and then feed back the estimated CSI to the ENs over an uplink control channel. The CSI is finally communicated from the ENs to the cloud over fronthaul links. To study the problem in its simplest instantiation, we assume here that the feedback channel from users to ENs is ideal, so that each EN has full information about all channel gains $\textbf{H}=\{h_{k}^{i}\}$, with $i, k\in[K]$. In contrast, owing to fronthaul transmission delays and processing, the CSI available at the cloud is assumed to be delayed. The distortion of the outdated CSI $\{\hat{h}_{k}^{i}\}$ available at the cloud, i.e., the innovation $h_{k}^{i}-\hat{h}_{k}^{i}$, is characterized by its power $\E[|h_{k}^{i}-\hat{h}_{k}^{i}|^2]$. Following the standard model considered in, e.g., \cite{DJ:14,GJ:12o}, we assume that this mean squared error scales as $P^{-\alpha}$ with respect to the SNR $P$, for some $\alpha \geq 0$. More formally, we impose the exponential equality
\begin{align} \label{def:scaling}
\E[|h_{k}^{i}-\hat{h}_{k}^{i}|^2]\doteq P^{-\alpha}.
\end{align}
In the high-SNR regime, the case $\alpha=0$, which corresponds to finite-precision CSI, is equivalent to having no CSIT; while, at the other extreme, the case $\alpha=1$ yields a negligible CSI error. We study the general case in which the cloud has imperfect CSI in the sense of \eqref{def:scaling} with any arbitrary value $\alpha\in[0,1]$.

\subsection{Caching and Delivery Policies}

The communication protocol consists of caching and delivery phases.

1) \emph{Caching phase:} In the caching phase, the cache of each EN $i$ is proactively filled with information from the content library. More precisely, each file $W_n$ is mapped to a cached content $V_{n}^i$ by an arbitrary function $f_{n}^i$ as $V_{n}^i=f_{n}^i(W_n)$. To satisfy the capacity limitation of the cache, we have the inequality $\log_2 |V_{n}^i|\leq \mu L$ bits, where $|V_{n}^i|$ represents the alphabet of variable $V_{n}^i$. The overall cache content of EN $i$ is hence given as $V^i=\{V_{n}^i\}_{n=1}^{N}$. 

2) \emph{Delivery phase:} In the delivery phase, for any demand vector $\dv=[d_1,d_2,\cdots,d_K]$, where $W_{d_k}$ is the file requested by user $k$, the delivery of the $K$ files is completed via fronthaul and edge transmission. Fronthaul transmission occurs first with a  duration of $T_F$ symbols. On each $i$th fronthaul link, the CP sends a message $U^{i}$ about the requested files to EN $i\in[K]$. This message is obtained as a function of the available channel estimates $\hat{\textbf{H}}=\{\hat{h}_{k}^{i}\}_{i,k\in[K]}$, of the requested files, and of information about the cached files, as $U^i=g^i_{f}\big(\dv,\hat{\textbf{H}}, \{V^{i}\}_{i=1}^{K}\big)$. By the fronthaul capacity constant, we have the condition $\log_2 |U^i|\leq T_F C$ bits. Fronthaul transfer is followed by edge transmission, whereby each EN $i$ sends $T_E$ symbols obtained as the function $x_i=g^i_{e}\big(\dv,\textbf{H}, U^i,V^{i}\big)$ on channel \eqref{transmit}.


\subsection{Performance Metric: NDT}

A sequence of policies defined by functions $\big\{\{f_{n}^i\}, g_{f}^i,g^i_{e}\big\}$ is feasible if each user $k$ is able to decode the desired file $W_{d_k}$ with negligible probability of error when $L\rightarrow \infty$. For any feasible policy, we are interested in the delivery latency performance in the high-SNR regime. As in \cite{STS:17}, we parameterize the fronthaul capacity as $C= r \log(P)$, where $r$ is the fronthaul rate. Furthermore, we normalize the delivery latency $T_F+T_E$, where $T_F$ and $T_E$ are the corresponding fronthaul and edge latencies, by the term $L/\log(P)$. This represents the downlink latency of an ideal system where each user is served without interference at the high-SNR capacity $\log P$ bit/symbol via the wireless channels \cite{STS:17}. As a result, the fronthaul and edge NDTs are defined as 
\begin{align}
\delta_F = \lim_{P\rightarrow\infty} \lim_{L\rightarrow\infty} \frac{\E[T_F]}{L/\log(P)} ~ \text{and} ~\delta_E =\lim_{P\rightarrow\infty} \lim_{L\rightarrow\infty}  \frac{\E[T_E]}{L/\log(P)}. \notag
\end{align}
The overall NDT achieved by a sequence of feasible policies is given as 
\begin{align} \label{def:deltaach}
\delta = \delta_E+\delta_E.
\end{align}
For given parameters $(\mu,r,\alpha)$, the minimum NDT across all feasible policies is denoted as $\delta^*(\mu,r,\alpha)$.

\section{Orthogonal cloud-edge delivery} \label{sec:OMA}

In this section, we study the NDT performance of various policies based on state-of-the-art caching and delivery strategies, as described in \cite{STS:17,TS:16,HND:16,NMA:17}. According to these approaches, the ENs transmit information obtained from the caches and from the cloud on the fronthaul links in orthogonal time-slots. We refer to this class of techniques as performing \emph{orthogonal cloud-edge delivery}. As shown in \cite{STS:17}, the mentioned orthogonal strategies yield the minimum NDT in the presence of perfect CSI at the cloud, i.e., with $\alpha=1$, when $K=2$, and are more generally optimal within a multiplicative factor of two. We start with edge-based and cloud-based policies, which are then orthogonally multiplexed by means of time-sharing. It is noted that cloud-based soft-transfer fronthauling, proposed in \cite{STS:17}, is generalized here to account for imperfect CSI at the cloud (Lemma~\ref{lem:cs}). 

\subsection{Edge-based Policies} \label{subsec:eb}

We first consider caching polices based on only edge resources, for which the NDT performance does not depend on the CSI quality $\alpha$ at the cloud. Note that here we have zero fronthaul NDT, and hence the NDT \eqref{def:deltaach} is given as $\delta=\delta_E$.

\emph{1) Edge-based ZF-beamforming}. When the fractional cache capacity is $\mu=1$, full cooperation at the ENs is possible given that all ENs store the entire library. As a result, the ideal NDT of 1 can be achieved by means of ZF beamforming \cite[Lemma 2]{STS:17}, i.e., we have the achievable NDT
\begin{align}  \label{ndt:ez}
\delta_{EZ} =1.
\end{align}


\emph{2) Edge-based interference alignment (IA)}. Consider now the case with cache capacity $\mu=1/K$. This is the minimum cache size allowing for delivery based only on cached contents. By caching disjoint fractions of all files and using X channel IA for delivery \cite[Lemma 3]{STS:17}, the following NDT is achievable
\begin{align}  \label{ndt:ei} 
\delta_{EI}=2-\frac{1}{K}.
\end{align}


\subsection{Cloud-based Policies} \label{sebsec:cb}

We now focus on cloud-based policies assuming that the ENs have zero cache capacity, i.e., $\mu=0$. As discussed in \cite{STS:17}, we can distinguish hard and soft-transfer fronthauling approaches. The former techniques send uncoded fractions of files via the fronthaul links. In contrast, the latter method implements ZF beamforming at the cloud, and sends quantized precoded signals to the ENs. Unlike hard-transfer fronthauling, soft-transfer fronthauling relies on perfect CSI at the cloud in order to perform ZF beamforming. Here, we generalize the analysis of the corresponding achievable NDT to the case of imperfect CSI.

\emph{3) Cloud-based hard-transfer fronthauling}. With zero cache capacity, i.e., with $\mu=0$, the NDT 
\begin{align}  \label{ndt:ch}
\delta_{CH} =\min\Big\{1+\frac{K}{r},2-\frac{1}{K}+\frac{1}{r}\Big\}
\end{align}
is achievable by using one of the following hard-transfer fronthauling approaches, which attain the two NDTs in \eqref{ndt:ch}. In the first scheme, the cloud sends all the requested files to each EN via the respective fronthaul link, and the ENs carry out cooperative ZF beamforming. In the second scheme, the requested files are split into disjoint fragments and sent to the ENs, which performs X channel IA (see \cite[Proposition 2]{STS:17}). 

\emph{4) Achievable NDT with cloud-based soft-transfer fronthauling}. With soft-transfer fronthauling, we have the achievable NDT described in the following lemma. 
\begin{lemma} \label{lem:cs}
In an $K\times K$ F-RAN with CSI quality $\alpha \in [0,1]$ at the cloud and cache capacity $\mu=0$, the NDT 
\begin{align}  \label{ndt:cs}
\delta_{CS}=\frac{1}{r}+\frac{1}{\alpha}
\end{align}
is achievable by means of soft-transfer fronthauling. 
\end{lemma}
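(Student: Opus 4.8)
The plan is to describe an explicit soft-transfer fronthauling scheme and compute the fronthaul and edge NDTs separately, showing $\delta_F = 1/r$ and $\delta_E = 1/\alpha$. First I would have the cloud compute, based on its delayed CSI estimates $\hat{\textbf{H}}$, a zero-forcing precoding matrix $\hat{\textbf{V}} = \hat{\textbf{H}}^{-1}$ (suitably normalized to respect the power constraint), and form the intended precoded analog signals $\tilde{x}_i$ for each EN $i$, where $\tilde{\xv} = \hat{\textbf{V}}\sv$ and $\sv$ carries the $K$ requested files each at rate $\alpha\log P$. The cloud then quantizes each $\tilde{x}_i$ and sends the quantization bits over fronthaul link $i$; each EN $i$ reconstructs $x_i$ and transmits it. The key modification relative to the perfect-CSI analysis of \cite{STS:17} is the rate and power bookkeeping forced by the CSI error scaling \eqref{def:scaling}.

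For the edge NDT: with perfect CSI, ZF would deliver each user an interference-free link at prelog $1$, giving $\delta_E = 1$. With CSI error of power $P^{-\alpha}$, the residual inter-user interference after ZF at user $k$ has power $\doteq P \cdot P^{-\alpha} = P^{1-\alpha}$, so each user sees an effective SINR $\doteq P^\alpha$, supporting a symmetric rate of $\alpha\log P$ bit/symbol. To deliver $L$ bits to each user therefore requires $T_E \doteq L/(\alpha\log P)$ symbols, which after normalization by $L/\log P$ gives $\delta_E = 1/\alpha$. For the fronthaul NDT: the analog precoded signal $\tilde{x}_i$ must be conveyed with quantization noise low enough not to dominate; since the effective noise floor at the receivers is already $P^{1-\alpha}$ (well above the unit thermal noise), a quantization rate that scales as $\log P$ bit/symbol suffices — more precisely, one can afford a quantization resolution matched to the signal dynamic range $P$, which needs $\doteq \log P$ bits per channel use of the edge transmission. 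Since the edge phase lasts $T_E$ symbols, the cloud must ship $\doteq T_E \log P$ bits per fronthaul link; with fronthaul capacity $C = r\log P$, this takes $T_F \doteq T_E/r$ symbols. Then $\delta_F = \lim T_F/(L/\log P) = (1/r)\cdot \delta_E \cdot$ — wait, that would give $\delta_E/r$, not $1/r$; so the correct accounting must be that only $L$ information bits (not $T_E\log P$ quantization bits) need crossing per link, i.e. the cloud sends the file bits at rate $C$ and the quantized-precoding overhead is absorbed, giving $T_F \doteq L/C = L/(r\log P)$ and hence $\delta_F = 1/r$. I would present the fronthaul accounting carefully in this latter form, following the soft-transfer bookkeeping of \cite{STS:17}, adapted so the per-link payload is the $L$-bit file content.

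The main obstacle I anticipate is the rate–distortion argument for the fronthaul phase: one must argue that the quantization noise introduced by finite-rate fronthaul can be kept at power $\doteq 1$ (or at worst negligible relative to $P^{1-\alpha}$) using a fronthaul load whose NDT contribution is exactly $1/r$, independently of $\alpha$. This requires checking that the dynamic range of the ZF-precoded signals $\tilde{x}_i$ under delayed CSI does not blow up faster than polynomially in $P$ — i.e., that $\hat{\textbf{H}}^{-1}$ is well-conditioned with high probability, which follows from the channels (and their estimates) being drawn from a continuous distribution. The remaining steps — the SINR computation after imperfect ZF, the decodability at rate $\alpha\log P$ via standard random coding over the effective point-to-point link, and the limit manipulations defining $\delta_F, \delta_E$ — are routine and parallel \cite[Proposition 2]{STS:17} and the DoF analyses of \cite{DJ:14,GJ:12o}.
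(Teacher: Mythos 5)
Your edge-side analysis (residual interference of power $\doteq P\cdot P^{-\alpha}=P^{1-\alpha}$, hence SINR $\doteq P^{\alpha}$, rate $\alpha\log P$, $T_E=L/(\alpha\log P)$, $\delta_E=1/\alpha$) matches the paper. The gap is in the fronthaul accounting, and you in fact flagged it yourself: quantizing each precoded sample with $\doteq\log P$ bits (distortion at the unit-noise level) forces the cloud to ship $\doteq T_E\log P$ bits per link, giving $\delta_F=\delta_E/r=1/(\alpha r)$, not $1/r$. Your repair --- asserting that ``only $L$ information bits need crossing per link'' because ``the cloud sends the file bits at rate $C$ and the quantized-precoding overhead is absorbed'' --- is not a valid description of soft-transfer fronthauling: what travels on link $i$ is a quantized version of the precoded sample sequence $\bar{x}_{Fi}$, which mixes all $K$ users' codewords, so its fronthaul load is (bits per sample)$\times$(number of samples), not the payload of one file; no mechanism for the claimed ``absorption'' is given, and the statement as written conflates soft transfer with hard transfer (where per-link ZF cooperation actually costs $KL$ bits, cf.\ the $1+K/r$ term in \eqref{ndt:ch}).

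The missing idea is the one step you had within reach but did not take: since the receivers' effective noise floor is already $\doteq P^{1-\alpha}$ due to the residual interference from imperfect ZF, the quantization distortion need not be $\doteq 1$ --- it may itself be allowed to scale as $P^{1-\alpha}$ without degrading the SINR exponent. By the rate--distortion relation $\sigma^2=\E[|\bar{x}_{Fi}|^2]/(2^{B}-1)$ with $\E[|\bar{x}_{Fi}|^2]\doteq P$, this requires only $B=\alpha\log P$ bits per sample. Over the $L/R_F=L/(\alpha\log P)$ samples of the edge phase, the per-link fronthaul load is then exactly $B\cdot L/R_F=L$ bits, so $T_F=L/C=L/(r\log P)$ and $\delta_F=1/r$, independently of $\alpha$. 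This coupling of the quantization resolution to the CSI quality $\alpha$ is precisely what makes the claimed $\delta_{CS}=1/r+1/\alpha$ achievable; without it your scheme only establishes the weaker NDT $1/(\alpha r)+1/\alpha$. (Your secondary worry about the conditioning of $\hat{\textbf{H}}^{-1}$ is avoided in the paper by using unit-norm per-user ZF beamformers with symbol power $\doteq P$, which fixes $\E[|\bar{x}_{Fi}|^2]\doteq P$ directly.)
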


\begin{proof}
For any request vector $\dv$, the cloud precodes the $K$ requested files producing the $K \times 1$ vector
\begin{align} \label{precode}
\bar{\xv}_F=\sum_{k=1}^{K} \vv_{d_k} s_{d_k},
\end{align}
where $s_{d_k}$ is a symbol of the codeword encoding file $W_{d_k}$, and $\vv_{d_k} \in \C^{K\times 1}$ is the corresponding beamforming vector. The symbols $\{s_{d_k}\}_{k=1}^{K}$ are taken from a Gaussian codebook of equal rate $R_F$ bits/symbol and equal power. The power of the symbols $s_{d_k}$ is set to be exponentially equal to $P$, i.e., $\E[|s_{d_k}|^2]\doteq P$. The unit-norm precoding vector $\vv_{d_k}$ is selected to be orthogonal to all the estimates of the vectors $\{\hv_{k'}\}_{k'\in[K], k'\neq k}$, i.e., we have the equalities $\hat{\hv}_{k'}^T\vv_{d_k}=0$ for all $k'\neq k$. The cloud then quantizes each element $\bar{x}_{Fi}$ of the vector signal $\bar{\xv}_F=[\bar{x}_{F1},\cdots,\bar{x}_{FK}]^T$ as
\setlength{\abovedisplayskip}{-0.1pt} 
\begin{align} \label{quantization}
x_{Fi}=\bar{x}_{Fi}+q_i,
\end{align}  
for $i\in[K]$, where $q_i$ is the quantization noise, which is modeled as a Gaussian variable $q_i \sim \mathcal{CN} (0,\sigma^2)$ with variance $\sigma^2$, which is independent across index $i$. By standard rate distortion arguments, the variance $\sigma^2$ is related to the number $B$ of bits for each of the $L/R_F$ samples $x_{Fi}$ by the equalities $B=I(\bar{x}_{Fi};x_{Fi})=\log(1+\E[|\bar{x}_{Fi}|^2]/\sigma^2)$, and hence we have the equality $\sigma^2=\E[|\bar{x}_{Fi}|^2]/(2^{B}-1)$. We choose $B=\alpha \log P$, so that we have the exponential equality $\sigma^2 \doteq P/ P^\alpha = P^{1-\alpha}$. Note that the power constraint $\E[|x_{Fi}|^2]\doteq P$ is satisfied. The cloud sends the quantized signal $x_{Fi}$ to EN $i$ for $i\in[K]$ at the fronthaul rate $C=r\log P$ on the fronthaul links. As a result, the fronthaul latency is given as $T_F=B(L/R_F)/C=\alpha L/(R_F r)$. 


Each EN $i$ then forwards the quantized signal $x_{Fi}$, i.e., we set $x_i=x_{Fi}$ in \eqref{transmit}, and hence each user $k$ receives the signal
\begin{subequations}
\begin{align} \label{rec}
\! y_k&\!=\hv_{k}^T \xv_F+n_k= \hv_{k}^T \Bigg(\sum_{k=1}^{K} \vv_{d_k} s_{d_k} +\qv\Bigg)+n_k  \\
 \! &\!= \hv_{k}^T \vv_{d_k} s_{d_k} +\hv_{k}^T \Bigg(\sum_{k'=1, k'\neq k}^{K} \vv_{d_{k'}} s_{d_{k'}} +\qv \Bigg)\!+\!n_k\\
 \! &\! \stackrel{(a)}{=}\hv_{k}^T \vv_{d_k} s_{d_k}\! + \underbrace{\tilde{\hv}_{k}^T\sum_{k'=1, k'\neq k}^{K} (\vv_{d_{k'}} s_{d_{k'}} )}_{\defeq z_k} +\hv^T_{k} \qv \! + \! n_k, 
\end{align}
\end{subequations}
where we have defined the vectors $\qv=[q_1, \cdots,q_K]^T$ and $\tilde{\hv}_{k}=\hv_{k} -\hat{\hv}_{k}$; and equality (a) holds due to the conditions $\hv_{k}^T \vv_{d_{k'}}= (\hat{\hv}_{k}^T+\tilde{\hv}_{k}^T) \vv_{d_{k'}}=\tilde{\hv}_{k}^T\vv_{d_{k'}}$. Given the CSI error scaling \eqref{def:scaling}, the power of the interference $z_{k}$ satisfies the exponential equality $\E[|z_{k}|^2]\doteq P^{1-\alpha}$. 
Furthermore, the power of the effective noise, namely quantization plus Gaussian noise, is given as $\E[|\hv_{k}^T \qv|^2]+\E[|n_k|^2] \doteq P^{1-\alpha}$. It follows that the file $W_{d_k}$, encoded by $s_{d_k}$, can be decoded reliably in the high-SNR regime with rate $R_F=\log (P/P^{1-\alpha})=\alpha \log P$ by treating interference as noise. Finally, we have the fronthaul NDT $\delta_F=1/r$ and the duration of edge transmission is given by $T_E=L/(\alpha \log P)$, yielding the edge NDT $\delta_E=1/\alpha$. This completes the proof. 
\end{proof}

\subsection{Cloud and Edge-based Policies}
By means of time-sharing of the cloud- and edge-based solutions described above, cloud-edge orthogonal delivery achieves the following NDT. 

\begin{proposition} \label{ndtp}
\emph{Achievable NDT with cloud-edge orthogonal transmission}. In an $K\times K$ F-RAN with CSI quality $\alpha \in [0,1]$ at the cloud and cache capacity $\mu>0$, the following NDT is achievable by cloud-edge orthogonal delivery
\begin{align}  \label{ndt:sp}
\delta_{O}(\mu, r,\alpha) =\min\{(\delta_{EZ}-\delta_C)\mu+\delta_C, \delta'_{O}(\mu, r,\alpha)\}
\end{align}
where we have defined $\delta_C=\min\{\delta_{CH}, \delta_{CS}\}$ and 
\begin{align}
\delta'_{O}(\mu, r,\alpha)=\left\{
\begin{array}{ll}
       2-\mu, &\text{if}~ \mu \geq \frac{1}{K} \\ 
       (\delta_{EI}-\delta_C)K\mu+\delta_C, & \text{if}~ \mu \leq \frac{1}{K}.
\end{array} 
\right.
\end{align}
\end{proposition}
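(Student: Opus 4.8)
The plan is to establish \eqref{ndt:sp} by \emph{file splitting} together with orthogonal (time-division) multiplexing of the constituent achievable schemes already derived, namely edge-based ZF beamforming \eqref{ndt:ez}, edge-based IA \eqref{ndt:ei}, cloud-based hard-transfer fronthauling \eqref{ndt:ch}, and the cloud-based soft-transfer fronthauling of Lemma~\ref{lem:cs}. Two preliminary remarks reduce the work: first, at $\mu=0$ one may simply run whichever of the two cloud-based schemes has the smaller NDT, so that $\delta_C=\min\{\delta_{CH},\delta_{CS}\}$ is itself achievable at zero cache; second, since $\delta_O$ is a minimum of two terms and the smaller of finitely many achievable NDTs is trivially achievable (use the better scheme for the given $(\mu,r,\alpha)$), it suffices to realize each of the two terms in \eqref{ndt:sp} separately.

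For the first term, I would split every file $W_n$ into an independent part $W_n^{(1)}$ of $\mu L$ bits and a part $W_n^{(2)}$ of $(1-\mu)L$ bits. The sub-library $\{W_n^{(1)}\}$ is delivered by edge-based ZF beamforming at operating point $\mu=1$: each EN caches the whole of $\{W_n^{(1)}\}$, which costs exactly $\mu NL$ bits and thus meets the cache constraint, and the corresponding edge transmission takes normalized latency $\mu\,\delta_{EZ}$. Over a disjoint time interval, $\{W_n^{(2)}\}$ is delivered by the better cloud-based scheme with no additional caching, contributing normalized latency $(1-\mu)\,\delta_C$. Adding the latencies of the two orthogonal phases and normalizing by $L/\log P$ gives $\mu\,\delta_{EZ}+(1-\mu)\,\delta_C=(\delta_{EZ}-\delta_C)\mu+\delta_C$.

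For the second term I would treat the two cache regimes in turn. If $\mu\ge 1/K$, split each file into a part of $\beta L$ bits served by edge ZF ($\mu=1$) and a part of $(1-\beta)L$ bits served by edge IA ($\mu=1/K$); the total cache used is the fraction $\beta+(1-\beta)/K$ of $NL$ bits, so picking $\beta=(\mu-1/K)/(1-1/K)\in[0,1]$ saturates the cache and a short computation gives normalized latency $\beta\,\delta_{EZ}+(1-\beta)\,\delta_{EI}=2-\mu$. If $\mu\le 1/K$, split each file into a part of $K\mu L$ bits served by edge IA ($\mu=1/K$, using cache fraction $K\mu\cdot(1/K)=\mu$) and a part of $(1-K\mu)L$ bits served by the better cloud-based scheme (no cache), giving normalized latency $K\mu\,\delta_{EI}+(1-K\mu)\,\delta_C=(\delta_{EI}-\delta_C)K\mu+\delta_C$. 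Taking the minimum of the first term and the relevant branch of the second term then yields \eqref{ndt:sp}.

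There is no new information-theoretic ingredient beyond the established lemmas, so the only real obstacle is the bookkeeping: (i) verifying in each case that the per-file cache allocations sum to at most $\mu NL$ bits; (ii) noting that each sub-file still has length $\Theta(L)$, so that the reliability of every constituent scheme — in particular the treat-interference-as-noise argument underlying Lemma~\ref{lem:cs}, with its $\alpha$-dependent quantization — survives the limit $L\to\infty$; and (iii) confirming that, because the constituent schemes run on disjoint time intervals, the fronthaul and edge latencies simply add, so that after normalizing by $L/\log P$ and letting $P\to\infty$ the NDTs combine linearly with weights equal to the splitting fractions.
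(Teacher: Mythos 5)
Your proposal is correct and follows essentially the same route as the paper: the paper's proof is a one-line appeal to the standard time-sharing (file-splitting) lemma of \cite{STS:17}, which is exactly the argument you spell out, and your cache-budget and latency computations (e.g., $\beta\,\delta_{EZ}+(1-\beta)\,\delta_{EI}=2-\mu$ with $\beta=(\mu-1/K)/(1-1/K)$) check out. The only difference is that you make the bookkeeping explicit rather than citing the lemma.
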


\begin{proof}
The result follows by the standard time sharing \cite[Lemma 1]{STS:17}.
\end{proof}

\section{Non-orthogonal Cloud-edge Delivery}

In this section, we introduce and analyze the proposed non-orthogonal cloud-edge delivery scheme. Unlike the orthogonal strategies studied in the previous section, cloud-based delivery, which is based on soft-transfer fronthauling, and edge-based delivery, which leverages the cached contents, are performed simultaneously at the ENs. As illustrated in Fig.~\ref{fig:scheme}, the technique is specifically based on the superposition at the ENs of the quantized fronthaul signals precoded at the cloud and of the signals encoding cached information,  as well as on successive interference cancellation (SIC) at the users. 

\begin{proposition} \label{pro:superposition}
\emph{Achievable NDT with cloud-edge non-orthogonal transmission.} 
In an $K\times K$ F-RAN with CSI quality $\alpha \in [0,1]$ at the cloud and cache capacity $\mu>0$, the NDT 
\begin{align}  \label{ndt:sp}
\delta_{NO}(\mu, r,\alpha) =\left\{
\begin{array}{ll}
       1+\frac{1-\mu}{r}, &\mu \geq 1-\alpha \\ 
       (1-\mu)(\frac{1}{\alpha}+\frac{1}{r}), & \mu \leq 1-\alpha,
\end{array} 
\right.
\end{align}
is achievable by means of non-orthogonal cloud-edge delivery. 
\end{proposition}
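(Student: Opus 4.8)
The plan is to build a two-layer superposition scheme at the ENs that simultaneously runs soft-transfer fronthauling (for the part of each file not in the caches) and cooperative edge ZF beamforming (for the cached part). In the caching phase, every EN stores the same $\mu L$-bit portion of each file $W_n$ (say a fixed prefix), which respects the cache size since it is $\mu N L$ bits; the remaining $(1-\mu)L$ bits of each file must be supplied by the cloud. In the delivery phase, for a demand $\dv$, the cloud applies soft-transfer fronthauling exactly as in Lemma~\ref{lem:cs} but only to the non-cached $(1-\mu)L$-bit sub-files: it Gaussian-encodes them at some rate $R_c$, ZF-precodes the symbols $s_{d_k}$ onto the \emph{delayed} estimates $\hat{\hv}_k$, quantizes each precoded sample to $B$ bits, and ships the quantized streams over the fronthaul. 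In parallel, using their \emph{perfect} CSI $\textbf{H}$, the ENs cooperatively ZF-beamform the cached sub-files (one stream per user) at some rate $R_e$. Each EN transmits the \emph{sum} of its quantized cloud sample and its edge-ZF sample, and each user runs SIC.

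For the power split I would place the edge-ZF layer on top at power $\doteq P$ and the cloud layer beneath at power $P_c$. Since the edge layer is ZF-beamformed with perfect CSI it leaks nothing across users, so at user $k$ the edge stream $W_{d_k}$ competes only with its own cloud stream (power $\doteq P_c$), the residual cloud inter-user leakage (power $\doteq P^{-\alpha}P_c$, by \eqref{def:scaling}), the quantization noise ($\sigma^2 = P_c/2^{B}$), and thermal noise; hence the edge layer is decodable treating the cloud layer as noise at rate $R_e\doteq\log(P/P_c)$. User $k$ then cancels it and, exactly as in the proof of Lemma~\ref{lem:cs}, decodes its cloud stream with $\mathrm{SINR}\doteq P_c/(P^{-\alpha}P_c+\sigma^2+1)$, i.e.\ at rate $R_c$. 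Since the fronthaul carries $B$ bits for each of the $(1-\mu)L/R_c$ cloud samples per EN, one gets $\delta_F=B(1-\mu)/(R_c r)$; taking $B$ numerically equal to $R_c$ makes $\delta_F=(1-\mu)/r$ throughout, and it remains to pick $R_c$ (hence $B$ and $\sigma^2$) and $P_c$ so that the two layers jointly deliver all $L$ bits.

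Two regimes then emerge. If $\mu\ge1-\alpha$: take $R_c=(1-\mu)\log P$ — admissible because the soft-transfer ceiling $\alpha\log P$ is at least $(1-\mu)\log P$ here — so $B=(1-\mu)\log P$; choosing $P_c\doteq P^{1-\mu}$ yields $\sigma^2\doteq1$, and the hypothesis $\mu\ge1-\alpha$ is precisely what makes the $P^{1-\mu-\alpha}$ term of the cloud-SINR denominator negligible, giving cloud rate $(1-\mu)\log P$ and edge rate $R_e\doteq\log(P/P^{1-\mu})=\mu\log P$. Both codewords have length $L/\log P$, so $\delta_E=1$ and $\delta_{NO}=1+(1-\mu)/r$. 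If $\mu\le1-\alpha$: the cloud layer is capped at $R_c=\alpha\log P$; set $B=\alpha\log P$ and $P_c\doteq P^{(1-\mu-\mu\alpha)/(1-\mu)}$, for which $\mu\le1-\alpha$ guarantees $P^{\alpha}\le P_c\le P$ (in exponential order), so $\sigma^2\doteq P^{-\alpha}P_c$, the cloud SINR is $\doteq P^{\alpha}$, and the edge rate is $R_e\doteq\log(P/P_c)=\tfrac{\mu\alpha}{1-\mu}\log P$. The common block length is $(1-\mu)L/(\alpha\log P)$, so $\delta_E=(1-\mu)/\alpha$ and $\delta_{NO}=(1-\mu)(1/\alpha+1/r)$. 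In both cases each user recovers its cached $\mu L$ bits plus its cloud-delivered $(1-\mu)L$ bits, i.e.\ the whole file $W_{d_k}$.

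The only real content beyond exponent bookkeeping is the coupled choice of $(R_c,B,P_c)$: one must simultaneously ensure (i) the edge layer is decodable above the cloud layer, (ii) the cloud layer is decodable after cancellation at the claimed rate, and (iii) the per-EN power budget $P_e+P_c\doteq P$ is respected. I expect this balancing to be the main obstacle, together with checking that the case split at $\mu=1-\alpha$ is not an assumption but an output of the construction — it is exactly the threshold at which the cloud-SINR denominator $P^{-\alpha}P_c+\sigma^2+1$ changes exponential order, equivalently the point where the target cloud rate $(1-\mu)\log P$ reaches the $\alpha\log P$ ceiling forced by the delayed CSI and the edge layer must compensate at a reduced rate. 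The generic-position facts (existence of the ZF directions and $\hv_k^T\vv_{d_k}\doteq1$, $\hv_k^T\wv_k\doteq1$ for the relevant beamformers) are used just as in Lemma~\ref{lem:cs}.
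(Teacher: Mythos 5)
Your construction is correct, and it reaches \eqref{ndt:sp} by a genuinely different route than the paper. The paper fixes the power split and rates once and for all: the cloud layer is transmitted at power $\doteq P^{\alpha}$ underneath the edge layer at power $\doteq P$, with $B=\alpha\log P$ so that $\sigma^2\doteq 1$, giving fixed per-layer rates $R_E=(1-\alpha)\log P$ and $R_F=\alpha\log P$; the two layers then generally finish at different times, and in the regime $\mu\ge 1-\alpha$ the paper adds a second phase — once the cloud-precoded signal is decoded at time $T_{E2}$ it sets $\xv_F=0$ and delivers the remaining fraction of the cached subfiles interference-free at rate $\log P$, which is what yields $\delta_E=1$. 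You instead make the power $P_c$, quantizer resolution $B$ and cloud rate $R_c$ functions of $(\mu,\alpha)$ (namely $P_c\doteq P^{1-\mu}$, $R_c=(1-\mu)\log P$ when $\mu\ge 1-\alpha$, and $P_c\doteq P^{(1-\mu-\mu\alpha)/(1-\mu)}$, $R_c=\alpha\log P$ when $\mu\le 1-\alpha$) so that the edge and cloud codewords have exactly equal durations in both regimes; I checked the exponent bookkeeping (edge SINR $\doteq P/P_c$, cloud SINR $\doteq P_c/\max\{P^{-\alpha}P_c,\sigma^2,1\}$, fronthaul load $B\cdot(1-\mu)L/R_c$) and it is consistent, including the inequalities $P^{\alpha}\le P_c\le P$ that your case split requires, which indeed hold exactly when $\mu\le 1-\alpha$. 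What your approach buys is a single-phase scheme with matched block lengths and no rate switching, and it makes the threshold $\mu=1-\alpha$ emerge as the point where the target cloud rate hits the $\alpha\log P$ ceiling; what the paper's approach buys is a $\mu$-independent power/quantizer design (the same superposition as Lemma~\ref{lem:cs} with the cloud layer simply scaled to $P^{\alpha}$), at the cost of the extra time-splitting argument for $\mu\ge 1-\alpha$. Both are valid achievability proofs of the proposition.
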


\begin{figure}[t!] 
  \centering
\includegraphics[width=1.03\columnwidth]{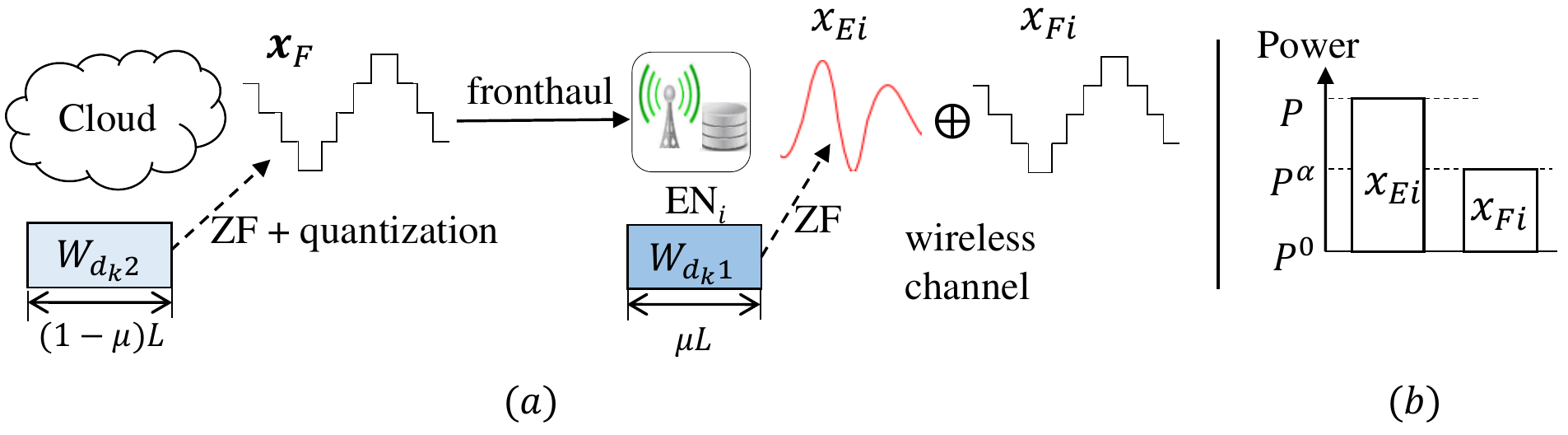}
\vspace{-1.8em}
\caption{Illustration of non-orthogonal cloud-edge delivery: (a) caching and delivery; and (b) powers of the signals $x_{Ei}$ and $x_{Fi}$ at each EN $i$.} 
\label{fig:scheme}
\vspace{-1.5em}
\end{figure}

A sketch of the proof is as follows and a full proof can be found in the Appendix. As illustrated in Fig.~\ref{fig:scheme}, in the caching phase, a fraction $\mu$ of each file is stored at all ENs. In the delivery phase, the cloud precodes the uncached $(1-\mu)$-fraction of the requested files using the cloud-based soft-transfer fronthauling scheme detailed in the proof of Lemma~\ref{lem:cs}. The quantized signals are then sent to the ENs via the fronthaul links. The ENs perform cooperative ZF precoding for the fraction $\mu$ of the cached contents using edge-based ZF beamforming as described in Section~\ref{subsec:eb}. The resulting fronthaul and locally encoded signals are summed and sent to the users, with the former being transmitted with a lower power than the latter. Each user decodes the two signals by using SIC: the locally precoded signal, which has a higher power, is decoded first by treating the fronthaul precoded signal as noise. This signal is then removed from the received signals, and, finally, the fronthaul precoded signal is decoded by the user.

\begin{remark}
In a $K\times K$ F-RAN with $r\geq 1$ and $\mu \geq 1-\alpha$, the NDT \eqref{ndt:sp} coincides with the NDT derived in \cite{STS:17} for the case of perfect CSI. This NDT is proved in \cite{TS:16} to be optimal for $K=2$ and to be generally within a multiplicative gap of 2 from the minimum NDT under perfect CSI. This suggests that imperfect CSI at the cloud may not cause any performance degradation as long as $\mu$ and $r$ are sufficiently large if non-orthogonal transmission is used. 
\end{remark}

%

Finally, combining orthogonal and non-orthogonal cloud-edge approaches, an improved achievable NDT can be obtained by means of time-sharing. 

\begin{proposition} \label{pro:improve}
\emph{(Achievable NDT)}.  In an $K\times K$ F-RAN with CSI quality $\alpha \in [0,1]$ at the cloud and cache capacity $\mu>0$, the following NDT is achievable 
\begin{align}  \label{ndt}
\delta(\mu, r,\alpha) =\text{l.c.e.} \big(\delta_{O}(\mu, r,\alpha), \delta_{NO}(\mu, r,\alpha)\big),
\end{align}
where the lower convex envelope (l.c.e.)\footnote{The l.c.e., is the supremum of all convex functions that lie under the given function.} is evaluated with respect to $\mu$.
\end{proposition}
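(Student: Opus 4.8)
The plan is to derive the result by the standard cache-sharing (file-splitting) argument of \cite[Lemma 1]{STS:17}, applied to the two achievable schemes of Propositions~\ref{ndtp} and~\ref{pro:superposition}. Fix the fronthaul rate $r$ and the CSI quality $\alpha$, and write $\delta_O(\mu)=\delta_O(\mu,r,\alpha)$ and $\delta_{NO}(\mu)=\delta_{NO}(\mu,r,\alpha)$. Both are achievable pointwise in $\mu$, so it suffices to show that any NDT lying on a chord joining two achievable operating points is itself achievable: the lower convex envelope with respect to $\mu$ is exactly the set of such points.

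First I would fix a target cache fraction $\mu$, a weight $\lambda\in[0,1]$, two cache fractions $\mu_1,\mu_2\in(0,1]$ with $\lambda\mu_1+(1-\lambda)\mu_2=\mu$, and an assignment of one of the two schemes, $\mathrm{O}$ or $\mathrm{NO}$, to each index $j\in\{1,2\}$. I then split every library file $W_n$ into two independent subfiles $W_n^{(1)}$ and $W_n^{(2)}$ of sizes $\lambda L$ and $(1-\lambda)L$ bits, forming two sublibraries of $N$ files each. In the caching phase, EN $i$ devotes a fraction $\mu_1$ of its cache to content derived from $\{W_n^{(1)}\}$ according to scheme $j=1$ and a fraction $\mu_2$ to content derived from $\{W_n^{(2)}\}$ according to scheme $j=2$; the cache occupancy per EN is then $\lambda\mu_1 NL+(1-\lambda)\mu_2 NL=\mu NL$ bits, so the cache constraint holds. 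In the delivery phase the two sublibraries are delivered over disjoint time intervals, each run using the full per-EN power $P$ and the full fronthaul capacity $C=r\log P$, so that the two sub-deliveries neither interfere nor violate any constraint, and each user simply decodes its two requested subfiles independently, one per interval.

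Next I would account for the latency. For the $j$-th sub-delivery (with $\lambda_1=\lambda$ and $\lambda_2=1-\lambda$), the chosen scheme delivers files of size $\lambda_j L$ at fronthaul capacity $C=r\log P$; since the NDT normalization is by $\log P$ and not by $L$, the fronthaul rate parameter seen by that sub-delivery is still $r$, and by Propositions~\ref{ndtp} and~\ref{pro:superposition} its fronthaul-plus-edge latency is $\lambda_j\,\delta^{(j)}(\mu_j)\,L/\log P$ with $\delta^{(j)}\in\{\delta_O,\delta_{NO}\}$ the assigned scheme. Summing the two intervals and normalizing by $L/\log P$ gives the achievable NDT $\lambda\,\delta^{(1)}(\mu_1)+(1-\lambda)\,\delta^{(2)}(\mu_2)$. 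Optimizing over $\lambda$, over $\mu_1,\mu_2$ subject to $\lambda\mu_1+(1-\lambda)\mu_2=\mu$, and over the scheme assigned to each index yields, at cache fraction $\mu$, the achievable NDT $\text{l.c.e.}\big(\delta_O(\mu,r,\alpha),\delta_{NO}(\mu,r,\alpha)\big)$ evaluated with respect to $\mu$, which is the claim.

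The argument is essentially routine, so there is no deep obstacle; the only point needing care is the bookkeeping of the interleaving, namely checking that running the two sub-deliveries in orthogonal time slots costs nothing. Concretely, within each slot the wireless channel still has the high-SNR capacity scaling $\log P$ per channel use, so the per-scheme NDT analyses carry over verbatim under the replacement $L\mapsto\lambda_j L$, and the fronthaul bit budget must be charged as $T_F^{(j)}=B^{(j)}(\lambda_j L/R_F)/C$ with the same $C$. Once this is verified, the remaining content is precisely the convex-combination (time-sharing) statement already recorded as \cite[Lemma 1]{STS:17}.
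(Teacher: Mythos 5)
Your proposal is correct and matches the paper's argument: the paper establishes this proposition by invoking the standard time-sharing (file- and cache-splitting) lemma of \cite[Lemma 1]{STS:17}, exactly the construction you spell out, with the convex combinations of the operating points of $\delta_{O}$ and $\delta_{NO}$ yielding the lower convex envelope in $\mu$. Your explicit bookkeeping of the cache occupancy, the orthogonal sub-delivery intervals, and the latency normalization is simply the detailed version of that cited lemma, so there is no substantive difference in approach.
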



%

\begin{figure}[t!] 
  \centering
\includegraphics[width=0.68\columnwidth]{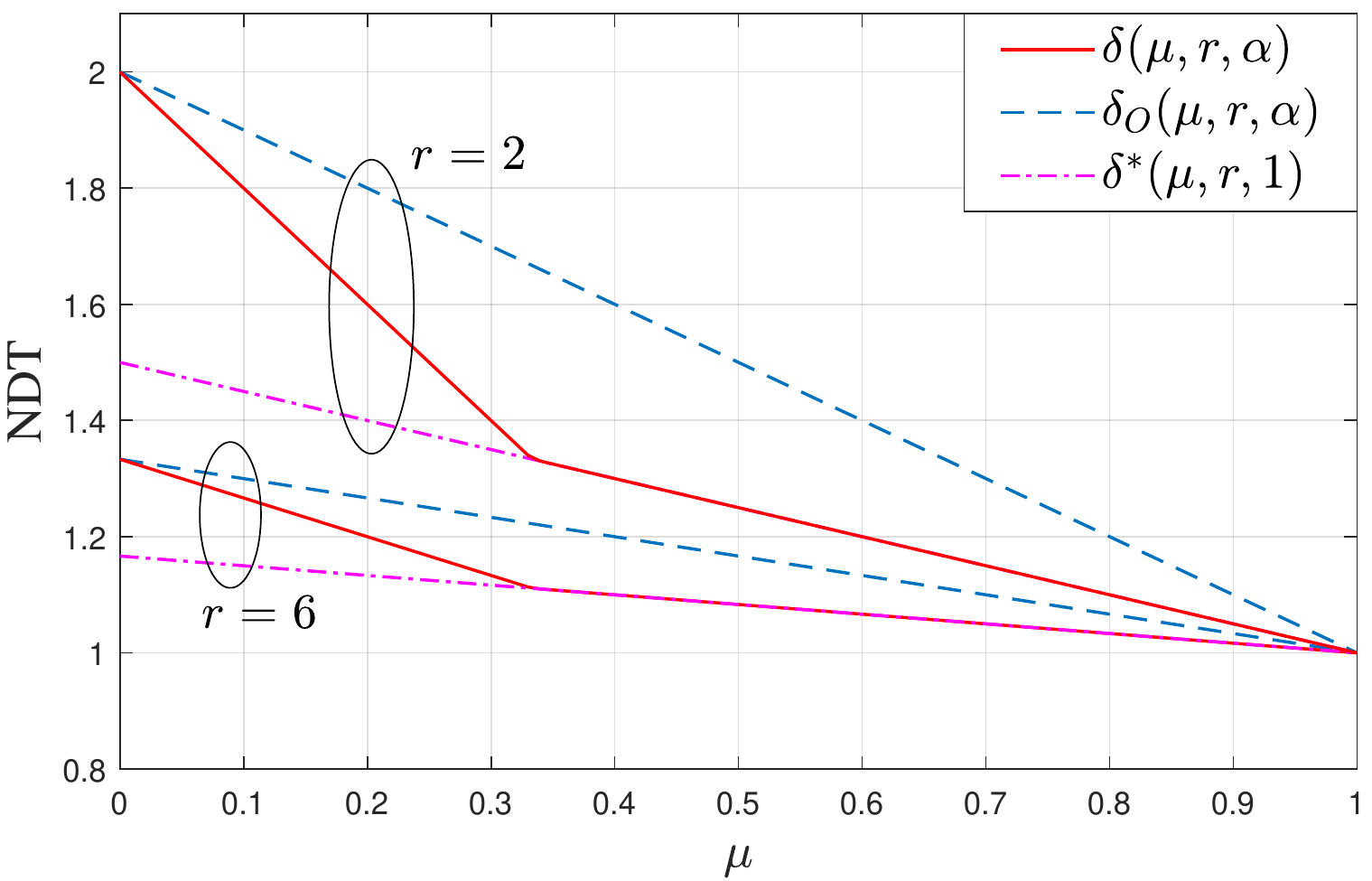}
\vspace{-0.8em}
\caption{NDT $\delta(\mu, r,\alpha)$ in \eqref{ndt} with non-orthogonal delivery and $\delta_{O}(\mu, r,\alpha)$ in \eqref{ndt:sp} with orthogonal delivery, and minimum NDT $\delta^*(\mu, r,1)$ with $\alpha=1$, versus $\mu$ for the $2\times 2$ F-RAN with different values of $r$ and $\alpha=2/3$.}
\label{fig:gain}
\vspace{-0.6em}
\end{figure}

\begin{figure}[t!] 
  \centering
\includegraphics[width=0.68\columnwidth]{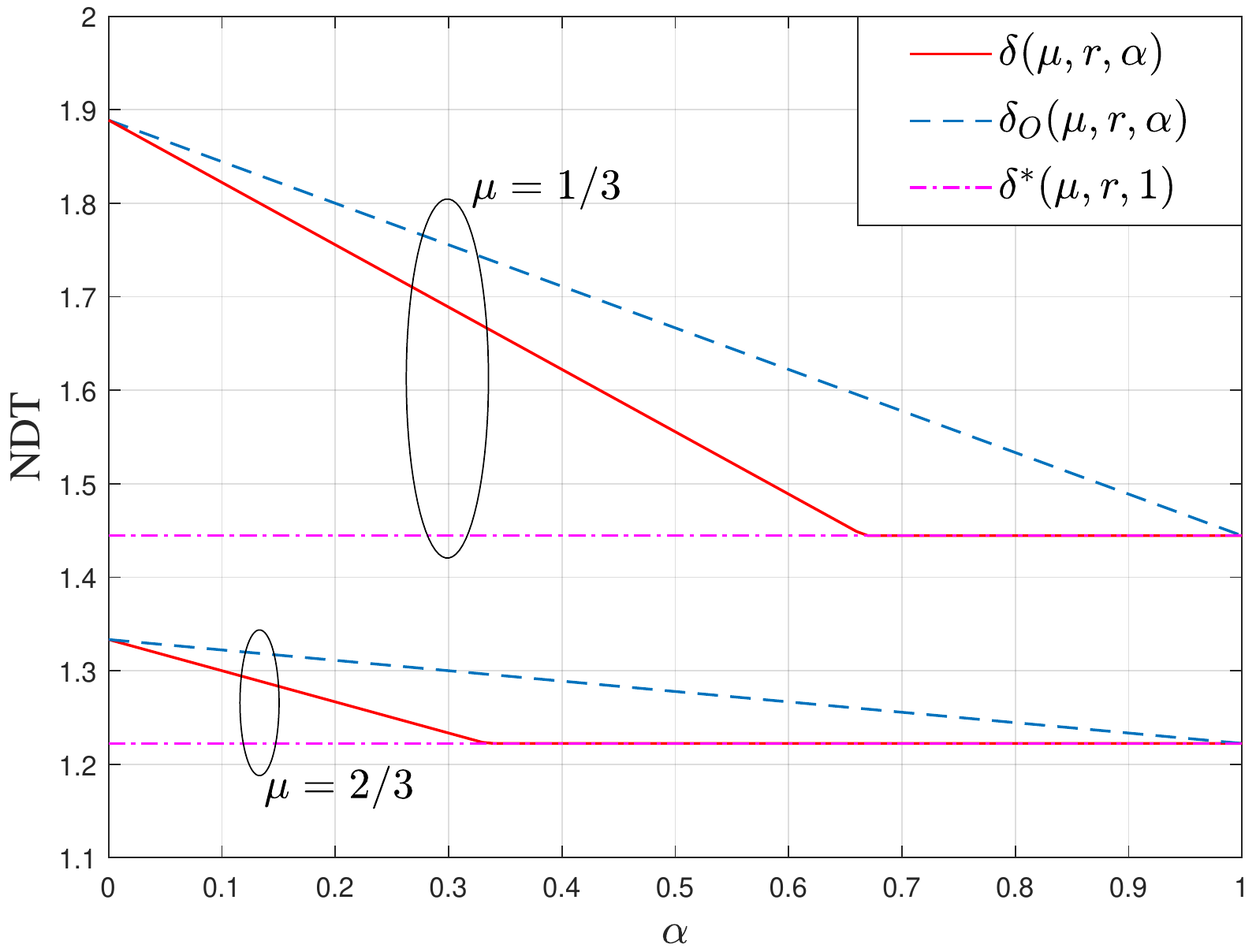}
\vspace{-0.8em}
\caption{NDT $\delta(\mu, r,\alpha)$ in \eqref{ndt} with non-orthogonal delivery and $\delta_{O}(\mu, r,\alpha)$ in \eqref{ndt:sp} with orthogonal delivery versus $\alpha$, and minimum NDT $\delta^*(\mu, r,1)$ with $\alpha=1$, for the $2\times 2$ F-RAN with $r=1.5$ and different values of $\mu$.}
\label{fig:gaina}
\vspace{-1.7em}
\end{figure}


\section{Numerical Examples} 



In this section, we compare the NDT $\delta_{O}(\mu, r,\alpha)$ in \eqref{ndt:sp} with conventional orthogonal delivery, the NDT $\delta(\mu, r,\alpha)$ in \eqref{ndt} with non-orthogonal delivery, and the minimum NDT $\delta^*(\mu,r,1)$ derived in \cite{TS:16} for the case of perfect CSI ($\alpha=1$) at the cloud. Throughout, we set $K=2$.
 
Fig.~\ref{fig:gain} plots the mentioned NDTs with $\alpha=2/3$ for both cases $r=2$ and $r=6$. It is observed that latency gains can be reaped via non-orthogonal delivery for the whole range of values $\mu\in (0,1)$. The largest gains are obtained in the intermediate regime where both cloud and edge-caching contributions are similarly relevant. Furthermore, as observed in Remark 1, when $\mu \geq 1-\alpha=1/3$, the achievable NDT $\delta(\mu, r,\alpha)$ coincides with the minimum NDT $\delta^*(\mu,r,1)$ obtained with perfect CSI. Note that this result can be achieved with orthogonal delivery only when $\mu=1$. 

Fig.~\ref{fig:gaina} plots the NDTs versus the CSI quality $\alpha$. As $\alpha$ increases, non-orthogonal delivery benefits more than orthogonal delivery from the improved CSI, and it obtains the latency savings for any value of $\alpha\in(0,1)$. Moreover, when $\alpha \geq 1-\mu$, as observed in Remark 1, non-orthogonal delivery can obtain the optimal performance under perfect CSI. 
\vspace{-0.8em}

\section{Conclusions}
In this paper, we studied the problem of content delivery in F-RAN in the presence of heterogeneous CSI availability between edge and cloud. 
A non-orthogonal transmission scheme superimposes signals produced at the cloud and at the edge based on cached contents was shown to reduce delivery latency as compared to conventional orthogonal methods. The approach can obtain optimal full-CSI performance for sufficiently large cache and fronthaul capacities. 

\section{appendix: proof of Proposition \ref{pro:superposition}}
In the caching phase, each file is split into two disjoint subfiles, i.e., $W_n=\{W_{n1}, W_{n2}\}$, where $W_{n1}$ has size $\mu L$ bits. The first subfiles $\{W_{n1}\}_{n=1}^{N}$ are placed in each EN's cache. In the delivery phase, consider any demand vector $\dv$. The cloud precodes the uncached subfiles $\{W_{d_k 2}\}_{k=1}^{K}$ producing signal $\bar{\xv}_F=\sum_{k=1}^{K} \vv_{d_k} s_{d_k}$. Unlike the signal in \eqref{precode}, here each symbol $s_{d_k}$ only encodes subfile $W_{d_k 2}$ instead of the whole file. Moreover, the power of the symbols $s_{d_k}$ is set to satisfy the exponential equality $\E[|s_{d_k}|^2]\doteq P^{\alpha}$. Similar to \eqref{quantization}, upon fronthaul quantization, the signal $x_{Fi}=\bar{x}_{Fi}+q_i$ for each EN $i$ is produced, where $q_i \sim \mathcal{CN} (0,\sigma^2)$ is the quantization noise. We again set $B=\alpha \log P$ as in Lemma~\ref{lem:cs}. As a result, we have the exponential equality $\sigma^2\doteq P^{\alpha}/P^{\alpha}=1$. Hence, the fronthaul latency is given as $T_F=B(L(1-\mu)/R_F)/C=BL(1-\mu)/(R_F C)$, yielding the fronthaul NDT $\delta_F=\alpha \log P (1-\alpha)/(R_F r)$.


The cached subfiles $\{W_{d_k1}\}_{k=1}^{K}$ available at all ENs are precoded cooperatively using edge-based ZF beamforming. To elaborate, the $K \times 1$ precoded signal produced across the ENs is given as 
\begin{align} \label{precode:edge}
\xv_E=\sum_{k=1}^{K} \uv_{d_k} c_{d_k},
\end{align}
where symbol $c_{d_k}$ denotes the codeword encoding the cached subfile $W_{d_k1}$, with rate $R_E$ bits/symbol; and the corresponding precoder $\uv_{d_k} \in \C^{K\times 1}$ is designed to be orthogonal to all the channels $\{\hv_{k'}\}_{k'\in[K], k'\neq k}$, i.e., $\hv_{k'}^T\uv_{d_k}=0$. Recall that this is feasible since the ENs have perfect CSI. The edge encoded signals $c_{d_k}$ have power $\E[|c_{d_k}|^2]\doteq P$.

The ENs superimpose the edge precoded signal $\xv_E$ with the cloud precoded and quantized signal $\xv_F$, yielding the signal $\xv=\xv_E+\xv_F$.
 Note that the power constraint $\E[|x_i|^2]\doteq P$ is satisfied. 
As a result, the received signal at user $k$ is given as 
\begin{subequations}
\begin{align} 
\!\!\! y_k&\!=\hv_k^T\xv+n_k  \\
         & =\hv_k^T(\xv_E+\xv_F)+n_k \\
 \!\! \!  &\!=\hv_k^T\Bigg(\sum_{k=1}^{K} \uv_{d_k} c_{d_k}+ \sum_{k=1}^{K} \vv_{d_k} s_{d_k}+\qv\Bigg) +n_k \\
	\!\!\! &\! \stackrel{(a)}{=} \underbrace{\hv_k^T  \uv_{d_k} c_{d_k}}_{\defeq \tilde{x}_{E,k}}+ \underbrace{\hv_{k}^T \vv_{d_k} s_{d_k}}_{\defeq \tilde{x}_{F,k}}  +  \underbrace{\tilde{\hv}_{k}^T \sum_{k'=1, k'\neq k}^{K}  (\vv_{d_{k'}} s_{d_{k'}} )}_{\defeq z_k}  \notag \\
	&~~~ +\hv^T_{k} \qv + n_k,  \label{eq:receive}
\end{align}
\end{subequations}
where equality (a) holds due to the conditions $\hv_{k}^T \uv_{d_{k'}}=0$ and $\hv_{k}^T \vv_{d_{k'}}= (\hat{\hv}_{k}^T+\tilde{\hv}_{k}^T) \vv_{d_{k'}}=\tilde{\hv}_{k}^T\vv_{d_{k'}}$ for any $k'\neq k$. The interference term $z_{k}$ in \eqref{eq:receive} lies power $\E[|z_{k}|^2]\doteq 1$ due to the CSI error scaling \eqref{def:scaling}; and for the effective noise terms, we have $\E[|\hv_{k}^T \qv|^2]+\E[|n_k|^2] \doteq 1$.

It follows that the edge-encoded symbols $c_{d_k}$ can be decoded first by user $k$ at rate $R_E=(1-\alpha)\log P$ by treating the interference-plus-noise-term, of power exponentially equal to $P^{\alpha}$, as noise. Having canceled the signal $\tilde{x}_{E,k}$, the user $k$ can decode the cloud-encoded signal $s_{d_k}$ at rate $R_F=\alpha \log P$. We now analyze the resulting edge NDT. To this end, we define the time $T_{E1}=\mu L/((1-\alpha)\log P)$ required to decode reliably the edge-precoded signals and the time $T_{E2}=(1-\mu)L/(\alpha\log P)$ needed for the cloud-precoded signals to be reliably decoded.


When $\mu\leq(1-\alpha)$, we have $T_{E1}\leq T_{E2}$. In this case, due to the poor CSI at the cloud, the edge NDT is dominated by the latency $T_{E2}$ and, as a result, the total NDT is $\delta=(1-\mu)(1/\alpha+1/r)$. 


When $\mu\geq(1-\alpha)$ instead, we have $T_{E1}\geq T_{E2}$ and the delivery latency is dominated by the transmission of edge-precoded signals. After time $T_{E2}\leq T_{E1}$, the cloud-precoded signal are reliably decoded and hence we can set $\xv_F=0$ from the rest of the transmit duration $T_{E}-T_{E2}$. The delivery of a fraction $T_{E2}/T_{E1}$ of the edge-precoded signal is completed by time $T_{E2}$. The remaining fraction $(1-T_{E2}/T_{E1})$ of each cached subfile can be sent via ZF beamforming at the edge at rate $R'_E=\log P$ bits without interference, and the required time is given as $T'_{E1}=\mu L(1-T_{E2}/T_{E1})/\log P$, yielding the total edge time $T_E=T_{E2}+T'_{E1}=L/\log P$, i.e., $\delta_E=1$. Hence, the overall NDT for this case is $\delta=(1-\mu)/r+1$. This completes the proof.

\bibliographystyle{IEEEtran}
\bibliography{IEEEabrv,final_refs}

\end{document}